\documentclass[12pt,a4paper]{article}
\usepackage{verbatim}
\usepackage[utf8]{inputenc} 
\usepackage[english]{babel} 
\usepackage{anysize} 
\usepackage{indentfirst}
\usepackage{amsmath}
\usepackage{amsthm} 
\usepackage{amsfonts}
\usepackage{amssymb}
\usepackage{graphicx}
\usepackage{enumitem}

\graphicspath{{figuras/}}
\usepackage{subfig} 
\usepackage{xcolor}

\usepackage{fancyhdr}
\usepackage{calligra}

\newtheorem{theorem}{Theorem}[section]
\newtheorem{proposition}[theorem]{Proposition}

\newtheorem{definition}[theorem]{Definition}
\newtheorem{example}[theorem]{Example}


\newcommand{\etal}{\textit{et al.}\ }


\title{Abstract homogeneity of interval-valued functions}

\author{Ana Shirley Monteiro \and Regivan Santiago \and Radko Mesiar \and Marisol Gomez \and Martin Papco \and Mikel Ferrero-Jaurrieta \and Humberto Bustince}

\date{}

\begin{document}

\maketitle


\section{Introduction}

In this work we develop abstract homogeneity (introduced by Santiago \etal (see \cite{SantiAbsHomo2021},  \cite{Santiago2021FHom})) to the interval context by considering admissible orders. We will show some properties and how it appears in fuzzy logic framework.

\section{Abstract homogeneity over IV-functions}

The following definition naturally extends abstract homogeneity of real-valued functions to interval-valued ones.

\begin{definition} [Abstract homogeneous IV-function]
\label{def_intabshomfunction}
Let be $G: \mathbb{I}([0,\!1])^2 \to \mathbb{I}([0,\!1])$  and $\varPhi: \mathbb{I}([0,\!1]) \to \mathbb{I}([0,\!1])$ be an order isomorphism. An IV-function $F: \mathbb{I}([0,\!1])^n \to \mathbb{I}([0,\!1])$ is said to be \emph{abstract homogeneous with respect to $G$ and $\varPhi$}, or \emph{$(G,\varPhi)$-homogeneous}, if
\begin{equation*}
F\big(G(\Lambda,X_1), \ldots,G(\Lambda,X_n)\big) = G\big(\varPhi(\Lambda),F(X_1, \ldots, X_n)\big)
\end{equation*}
for all $\Lambda, X_1, \ldots, X_n\in\mathbb{I}([0,\!1])$. If $\varPhi$ is the identity, then $F$ is simple said to be \emph{$G$-homogeneous}.
\end{definition}

\begin{proposition}\label{prop_intabshom}
Let $\pi_2: \mathbb{I}([0,\!1])^2 \to \mathbb{I}([0,\!1])$ be  given by $\pi_2(X_1,X_2)=X_2$ for all $(X_1,X_2)\in\mathbb{I}([0,\!1])^2$.Then every IV-function $F: \mathbb{I}([0,\!1])^n \to \mathbb{I}([0,\!1])$ is $\pi_2$-homogeneous.
\end{proposition}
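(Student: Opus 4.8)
The plan is to unwind Definition \ref{def_intabshomfunction} in the special case $G = \pi_2$ and $\varPhi = \mathrm{id}$ (the latter because the phrase ``$\pi_2$-homogeneous'' abbreviates ``$(\pi_2,\mathrm{id})$-homogeneous''), and then to observe that both sides of the defining functional equation collapse to the same expression, precisely because $\pi_2$ discards its first argument. There is essentially no computation to perform; the whole content is the bookkeeping of which argument $\pi_2$ returns.

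First I would fix arbitrary $\Lambda, X_1, \ldots, X_n \in \mathbb{I}([0,1])$ and simplify the left-hand side. Since $\pi_2(\Lambda, X_i) = X_i$ for every $i$, each inner application leaves $X_i$ unchanged, so
$$F\big(\pi_2(\Lambda, X_1), \ldots, \pi_2(\Lambda, X_n)\big) = F(X_1, \ldots, X_n).$$
Next I would evaluate the right-hand side. With $\varPhi = \mathrm{id}$ we have $\varPhi(\Lambda) = \Lambda$, and applying $\pi_2$ once more discards the first slot, yielding
$$\pi_2\big(\varPhi(\Lambda), F(X_1, \ldots, X_n)\big) = \pi_2\big(\Lambda, F(X_1, \ldots, X_n)\big) = F(X_1, \ldots, X_n).$$

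Both sides therefore equal $F(X_1, \ldots, X_n)$, so the required identity holds for all $\Lambda, X_1, \ldots, X_n$, and $F$ is $\pi_2$-homogeneous. The only point demanding any care is the correct reading of the abbreviation ``$\pi_2$-homogeneous'' as the case $\varPhi = \mathrm{id}$; once that is fixed, the conclusion is an immediate consequence of $\pi_2$ ignoring its first coordinate, and it holds for \emph{every} IV-function $F$ with no appeal to monotonicity, continuity, or any admissible-order structure. I do not anticipate a genuine obstacle; if anything, the subtlety is conceptual rather than technical, namely recognizing that the projection $\pi_2$ is a degenerate choice of $G$ for which abstract homogeneity is vacuously satisfied.
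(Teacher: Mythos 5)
Your proof is correct and follows exactly the same route as the paper: both sides of the defining identity reduce to $F(X_1,\ldots,X_n)$ because $\pi_2$ discards its first argument, with $\varPhi=\mathrm{id}$ as the convention for ``$\pi_2$-homogeneous.'' No gaps.
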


\begin{proof}
Let $\pi_2: \mathbb{I}([0,\!1])^2 \to \mathbb{I}([0,\!1])$ be given by $\pi_2(X_1,X_2)=X_2$ for all $(X_1,X_2)\in\mathbb{I}([0,\!1])^2$, and let $F: \mathbb{I}([0,\!1])^n \to \mathbb{I}([0,\!1])$ be an IV-function. Then for all $\Lambda,X_1,\ldots, X_n\in\mathbb{I}([0,\!1])$,
\begin{equation*}
F\big(\pi_2(\Lambda,X_1),\ldots,\pi_2(\Lambda,X_n)\big)
=F(X_1,\ldots, X_n)
=\pi_2\big(\Lambda,F(X_1, \ldots, X_n)\big)
\end{equation*}
Thus $F$ is $\pi_2$-homogeneous.
\end{proof}

\begin{theorem}
\label{theo_homidempfunction}
Let $G: \mathbb{I}([0,\!1])^2 \to \mathbb{I}([0,\!1])$ be a function. If $F: \mathbb{I}([0,\!1])^n \to \mathbb{I}([0,\!1])$ is  $G$-homogeneous and there exists $A \in \mathbb{I}([0,\!1])$ such that $F(A, \ldots, A) = A$ and $G_{A}: \mathbb{I}([0,\!1]) \to \mathbb{I}([0,\!1])$, given by $G_{A}(X) = G(X,A)$,  is a bijection, then $F$ is idempotent.
\end{theorem}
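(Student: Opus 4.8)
The plan is to exploit the $G$-homogeneity of $F$ at the one point where $F$ is already known to be fixed, namely the constant tuple $(A,\ldots,A)$, and then to let the bijectivity of $G_A$ spread this fixed behaviour across the whole domain. First I would instantiate the defining identity of $G$-homogeneity,
\[
F\big(G(\Lambda,X_1),\ldots,G(\Lambda,X_n)\big)=G\big(\Lambda,F(X_1,\ldots,X_n)\big),
\]
at $X_1=\cdots=X_n=A$. On the right-hand side the inner term becomes $F(A,\ldots,A)=A$ by hypothesis, so the identity collapses to
\[
F\big(G(\Lambda,A),\ldots,G(\Lambda,A)\big)=G(\Lambda,A)
\]
for every $\Lambda\in\mathbb{I}([0,\!1])$.

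Next I would rewrite $G(\Lambda,A)$ as $G_A(\Lambda)$, so the line above says precisely that $F$ is idempotent on the diagonal at every interval of the form $G_A(\Lambda)$. The final step is to observe that, because $G_A$ is a bijection (in fact only surjectivity is needed), every $Y\in\mathbb{I}([0,\!1])$ arises as $Y=G_A(\Lambda)=G(\Lambda,A)$ for some $\Lambda$. Substituting that $\Lambda$ into the collapsed identity gives $F(Y,\ldots,Y)=Y$, which is exactly the idempotency of $F$.

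I do not anticipate a genuine obstacle here: the argument is a clean substitution followed by a surjectivity sweep. The only points requiring a little care are recognizing that the hypothesis is used solely through the surjectivity of $G_A$ — injectivity plays no role in the conclusion — and correctly converting the quantifier ``for all $\Lambda$'' in the homogeneity identity into ``for all $Y$'' via the range of $G_A$. No order-theoretic or interval-specific machinery enters, so the same reasoning would go through verbatim over any codomain on which $G_A$ is onto.
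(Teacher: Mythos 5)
Your proof is correct and is essentially the paper's own argument: both instantiate the $G$-homogeneity identity at the constant tuple $(A,\ldots,A)$, use $F(A,\ldots,A)=A$ to collapse the right-hand side to $G_A(\Lambda)$, and then invoke the surjectivity of $G_A$ to reach every element of $\mathbb{I}([0,\!1])$. Your side remark that only surjectivity (not the full bijectivity stated in the hypothesis) is needed is a valid minor sharpening, but it does not change the route.
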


\begin{proof}
\begin{flushleft}
Let $X \in \mathbb{I}([0,\!1])$. Since $G_{A}$ is a bijection on $\mathbb{I}([0,\!1])$, $X = G_{A}(Y)$ for some $Y \in \mathbb{I}([0,\!1])$. Thus, from the $G$-homogeneity of $F$, it follows:
\begin{align*}
F(X, \ldots, X) & = F(G_{A}(Y), \ldots, G_{A}(Y)) = F(G(Y,A), \ldots, G(Y,A)) = G(Y, F(A, \ldots, A))\\
& = G(Y, A) = G_{A}(Y)= X.	
\end{align*}
\end{flushleft}
\end{proof}

\section{\label{sec:4}Homogeneity in IV-fuzzy logic}

\begin{proposition}\label{prop_intnegfunction}
Let $F: \mathbb{I}([0,\!1])^n \to \mathbb{I}([0,\!1])$ be an IV-function and let $P: \mathbb{I}([0,\!1])^2 \to \mathbb{I}([0,\!1])$ be the IV-product function, i.e. it is given by $P(X,Y) = X \cdot Y$. If $F$ is $P$-homogeneous then the IV-function $F_{N_S}: \mathbb{I}([0,\!1])^n \to \mathbb{I}([0,\!1])$ given by $F_{N_S}\big(X_1, \ldots, X_n\big) = N_S\big(F\big(N_S(X_1), \ldots, N_S(X_n)\big)\big)$ is homogeneous with respect to $P_{N_S}(X,Y)
= N_S\big(P\big(N_S(X),N_S(Y)\big)\big)$.
\end{proposition}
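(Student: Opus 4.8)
The plan is to verify directly the defining identity of $P_{N_S}$-homogeneity, namely
\[
F_{N_S}\big(P_{N_S}(\Lambda,X_1),\ldots,P_{N_S}(\Lambda,X_n)\big) = P_{N_S}\big(\Lambda, F_{N_S}(X_1,\ldots,X_n)\big),
\]
for arbitrary $\Lambda, X_1,\ldots,X_n \in \mathbb{I}([0,1])$. The only structural fact I would use beyond the stated hypotheses is that $N_S$ is a strong IV-negation, i.e.\ an involution, so that $N_S(N_S(X)) = X$ for every $X$; this is precisely what lets the nested negations in the definitions of $F_{N_S}$ and $P_{N_S}$ cancel.

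First I would expand the left-hand side via the definition of $F_{N_S}$, obtaining $N_S\big(F(\,N_S(P_{N_S}(\Lambda,X_1)),\ldots,N_S(P_{N_S}(\Lambda,X_n))\,)\big)$. For each argument, $N_S(P_{N_S}(\Lambda,X_i)) = N_S(N_S(P(N_S(\Lambda),N_S(X_i))))$ collapses to $P(N_S(\Lambda),N_S(X_i))$ by involutivity. Thus the inner application of $F$ becomes $F\big(P(N_S(\Lambda),N_S(X_1)),\ldots,P(N_S(\Lambda),N_S(X_n))\big)$, and invoking the $P$-homogeneity of $F$ with scalar $N_S(\Lambda)$ and inputs $N_S(X_i)$ rewrites it as $P\big(N_S(\Lambda),F(N_S(X_1),\ldots,N_S(X_n))\big)$. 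Hence the left-hand side equals $N_S\big(P\big(N_S(\Lambda),F(N_S(X_1),\ldots,N_S(X_n))\big)\big)$.

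Next I would reduce the right-hand side the same way: unfolding $P_{N_S}$ gives $N_S\big(P\big(N_S(\Lambda),N_S(F_{N_S}(X_1,\ldots,X_n))\big)\big)$, and since $N_S(F_{N_S}(X_1,\ldots,X_n)) = N_S(N_S(F(N_S(X_1),\ldots,N_S(X_n)))) = F(N_S(X_1),\ldots,N_S(X_n))$, again by involutivity, the right-hand side also equals $N_S\big(P\big(N_S(\Lambda),F(N_S(X_1),\ldots,N_S(X_n))\big)\big)$. The two expressions coincide, which closes the argument.

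I do not expect a genuine obstacle here: the statement is a conjugation/duality argument and reduces to careful bookkeeping of where $N_S$ is applied. The one point demanding attention is the involutivity of $N_S$ --- the whole reduction breaks if $N_S$ is merely a negation rather than a strong one --- so I would make that hypothesis explicit. A secondary check is that $P$-homogeneity may legitimately be instantiated at the interval scalar $N_S(\Lambda)$, which is immediate since the homogeneity identity is assumed for all $\Lambda \in \mathbb{I}([0,1])$ and $N_S(\Lambda)$ is again such an interval.
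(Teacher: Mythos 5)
Your proof is correct, but it takes a genuinely different route from the paper. The paper's proof is a coordinate-wise computation: it uses the concrete form of the standard negation, $N_S(X)=\boldsymbol{1}-X$, to derive explicit endpoint formulas $\underline{P_{N_S}(X,Y)}=\underline{X}+(1-\underline{X})\underline{Y}$ and $\overline{P_{N_S}(X,Y)}=\overline{X}+(1-\overline{X})\overline{Y}$ (i.e., it identifies $P_{N_S}$ as the IV probabilistic sum), and then checks equality of the lower and upper bounds of both sides of the homogeneity identity separately, invoking $P$-homogeneity in the middle of the endpoint calculation. Your argument instead stays at the level of interval-valued maps and is a pure conjugation/duality argument: you cancel the nested negations using only the involutivity $N_S\circ N_S=\mathrm{id}$, apply $P$-homogeneity at the scalar $N_S(\Lambda)$, and observe that both sides collapse to $N_S\big(P(N_S(\Lambda),F(N_S(X_1),\ldots,N_S(X_n)))\big)$. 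What your approach buys is generality and brevity: nothing about the product $P$ or the specific form of $N_S$ is used, so the same three lines prove that for any $G$ and any involutive $N$, $G$-homogeneity of $F$ implies $G_N$-homogeneity of $F_N$ --- a strictly stronger statement. What the paper's approach buys is the explicit closed form of $P_{N_S}$ as the probabilistic sum, which is of independent interest in the fuzzy-logic context of that section. You are also right to flag that involutivity must be assumed (or recalled); in the paper it is automatic because $N_S$ is the standard negation, but your proof makes the essential hypothesis visible.
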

\begin{proof} Due to the previous definition,
\begin{equation*}
\underline{F_{N_S}(X_1, \ldots, X_n)} = 1 - \overline{F(\boldsymbol{1} - X_1, \ldots, \boldsymbol{1} - X_n)}
\end{equation*}
and
\begin{equation*}
\overline{F_{N_S}(X_1, \ldots, X_n)} = 1 - \underline{F(\boldsymbol{1} - X_1,\ldots, \boldsymbol{1} - X_n)}.
\end{equation*}
Regarding $P_{N_S}$,
\begin{align*}
\underline{P_{N_S}(X,Y)}
&= 1 - \overline{P(\boldsymbol{1} - X, \boldsymbol{1} - Y)}
= 1 - \overline{\boldsymbol{1} - X} \cdot  \overline{\boldsymbol{1} - Y}
= 1-(1-\underline{X})\cdot(1-\underline{Y})\\
&=\underline{X} + \underline{Y} - \underline{X} \underline{Y}= \underline{X} + \left(1 - \underline{X} \right)  \underline{Y}.
\end{align*}
Similarly, $\overline{P_{N_S}(X,Y)} = \overline{X} + \overline{Y} - \overline{X} \overline{Y} =  \overline{X} + \left(1 - \overline{X} \right)  \overline{Y}$.

\medskip

Thus, for all $\Lambda, X_1, \ldots, X_n\in \mathbb{I}([0,\!1])$,
\begin{align*}
&\underline{F_{N_S}(P_{N_S}(\Lambda, X_1), \ldots, P_{N_S}(\Lambda, X_n))}\\
&=1 - \overline{F(\boldsymbol{1} - P_{N_S}(\Lambda, X_1), \ldots, \boldsymbol{1} - P_{N_S}(\Lambda, X_n))}\\
&=1 - \overline{F(\boldsymbol{1} - [\underline{\Lambda} + (1 - \underline{\Lambda}) \underline{X_1}, \overline{\Lambda} + (1 - \overline{\Lambda}) \overline{X_1}], \ldots, \boldsymbol{1} - [\underline{\Lambda} + (1 - \underline{\Lambda}) \underline{X_n}, \overline{\Lambda} + (1 - \overline{\Lambda}) \overline{x_n}])}\\
&=1 - \overline{F\left((\boldsymbol{1}-\Lambda)(\boldsymbol{1}-X_1), \ldots,  (\boldsymbol{1}-\Lambda)(\boldsymbol{1}-X_n) \right)}\\
&=1 - \overline{F\left(P(\boldsymbol{1}-\Lambda,\boldsymbol{1}-X_1), \ldots, P(\boldsymbol{1}-\Lambda,\boldsymbol{1}-X_n) \right)}\\
&=1 - \overline{P\left(\boldsymbol{1}-\Lambda, F(\boldsymbol{1}-X_1, \ldots,\boldsymbol{1}-X_n) \right)}\quad\text{[because of $P$-homogeneity of $F$]}\\
&=1 - \overline{(\boldsymbol{1}-\Lambda) \cdot F(\boldsymbol{1}-X_1, \ldots,\boldsymbol{1}-X_n)}\\
&=1 - \overline{\boldsymbol{1}-\Lambda} \cdot \overline{F(\boldsymbol{1}-X_1, \ldots,\boldsymbol{1}-X_n)}\\
&=1 - \left(1 - \underline{\Lambda} \right) \cdot \overline{F(\boldsymbol{1}-X_1, \ldots,\boldsymbol{1}-X_n)}
\end{align*}

Analogously,
\begin{equation*}
\overline{F_{N_S}(P_{N_S}(\Lambda, X_1), \ldots, P_{N_S}(\Lambda, X_n))} = 1 - \left(1 - \overline{\Lambda} \right) \cdot \underline{F(\boldsymbol{1}-X_1, \ldots,\boldsymbol{1}-X_n)}
\end{equation*}

On the other hand,
\begin{align*}
\underline{P_{N_S}(\Lambda, F_{N_S}(X_1, \ldots, X_n))}
&= \underline{\Lambda} + \left(1 - \underline{\Lambda} \right)  \cdot \underline{F_{N_S}(X_1, \ldots, X_n)}\\
&=\underline{\Lambda} + \left(1 - \underline{\Lambda} \right)  \cdot \left(1 - \overline{F(\boldsymbol{1}-X_1, \ldots, \boldsymbol{1}-X_n)}\right)\\
&=1  - \left(1 - \underline{\Lambda} \right)  \cdot \overline{F(\boldsymbol{1}-X_1, \ldots, \boldsymbol{1}-X_n)}.
\end{align*}

Therefore $F_{N_S}$ is $P_{N_S}$-homogeneous.
\end{proof}

\begin{example}
\label{example_minmaxprodhom}
IV-functions $\min$ and $\max$ are $P$-homogeneous. Indeed, for all $\Lambda,X,Y\in\mathbb{I}([0,\!1])$,  $\min(P(\Lambda, X), P(\Lambda,Y)) = \min(\Lambda \cdot X, \Lambda \cdot Y) = \Lambda \cdot \min(X,Y) = P(\Lambda, \min(X,Y))$. Analogously, $ \max(P(\Lambda,X), P(\Lambda,Y)) = \Lambda \cdot \max(X,Y) = P(\Lambda, \max(X,Y)).$
\end{example}



\end{document}